\newtheorem{definition}{Definition}
\newtheorem{theorem}{Theorem}
\newtheorem{corollary}{Corollary}
\newtheorem{observation}{Observation}
\newcommand{\err}[0]{\textnormal{err}}
\patchcmd{\@maketitle}{\LARGE}{\Large}{}{}
\title{{InfTDA: A Simple TopDown Mechanism for Hierarchical Differentially Private Counting Queries}}
\author{
  Fabrizio Boninsegna \\
  University of Padova \\
  \texttt{fabrizio.boninsegna@phd.unipd.it} \\
}
\begin{document}
\maketitle

\begin{abstract}
This paper extends \texttt{InfTDA}, a mechanism proposed in (Boninsegna, Silvestri, PETS 2025) for mobility datasets with origin and destination trips, in a general setting. The algorithm presented in this paper works for any dataset of $d$ categorical features and produces a differentially private \emph{synthetic} dataset that answers all \emph{hierarchical} queries, a special case of \emph{marginals}, each with bounded maximum absolute error. 
The algorithm builds upon the TopDown mechanism developed for the 2020 US Census.
\end{abstract}

\section{Introduction}
Consider a dataset $D \in \mathcal{X}^n$, where the data universe $\mathcal{X}$ is defined as the Cartesian product of $d$ categorical features, $\mathcal{X} = \mathcal{X}_1 \times \dots \times \mathcal{X}_d$. For example, this setting can represent a healthcare dataset with attributes such as ("medical condition", "nationality", "age", "gender"), or a geographic dataset with attributes such as ("country", "state", "city"). One of the most relevant classes of queries in these datasets is marginals. Given a subset $S \subseteq [d]$ and specific values $y_i \in \mathcal{X}_i$ for each $i \in S$, a marginal query asks: "how many records in $D$ have, for each attribute $i \in S$, the value equal to $y_i$?" The aim is to release a differentially private (DP) synthetic dataset $\tilde{D}$ that can accurately answer all such marginal queries. \texttt{InfTDA}, introduced in \cite{boninsegna2025} for mobility data, constructs a synthetic dataset $\tilde{D}$ that achieves a bounded maximum absolute error of $\tilde{O}(\sqrt{k^3 d})$ for any $k$-hierarchical query, where a $k$-hierarchical query is a marginal query with $S = [1, \dots, k]$. 
The algorithm employs a TopDown approach, initially developed by the US Census Bureau \cite{Abowd20222020}, which leverages differentially private and optimized $k$-hierarchical queries as constraints to refine the estimation of $(k+1)$-hierarchical queries.

\paragraph{Related Work}
The most straightforward method is to generate a \emph{contingency table} of $D$, which is its histogram representation \footnote{The histogram representation of $D$ is $h_D \in \mathbb{N}_0^{|\mathcal{X}|}$ defined for each $x \in \mathcal{X}_1\times \dots \times \mathcal{X}_d$ as $h_D(x) = |\{y\in D: x=y\}|$.}, and add Gaussian noise to each count \footnote{It can be used also Laplace noise, however here we focus in Gaussian noise as it provides better tail bounds.}. The resulting differentially private contingency table can answer all marginal queries; however, the variance grows linearly with the number of noisy counts required, leading to an absolute error of $\tilde{O}(e^{(d-k)/2})$ for any marginal with $|S| = k$.
By focusing solely on $k$-hierarchical queries, an alternative approach would be to release a differentially private contingency table $\tilde{D}_i$ for each query. In this case, each query could be answered with a maximum absolute error of $\tilde{O}(\sqrt{k d})$ \footnote{$k$ comes from a union bound on Gaussian tails over $\prod_{i=1}^{k}|\mathcal{X}_i| = O(2^k)$ elements.}. However, this strategy requires handling $d$ separate tables, which may be inconsistent with one another.
Hay et al. \cite{hierarchical} were the first to introduce the concept of hierarchical queries and to observe that accuracy can be improved by resolving such inconsistencies. The authors proposed a hierarchical tree structure, similar to the one defined for \texttt{InfTDA}, to organize the queries and developed the hierarchical mechanism, which enforces consistency through constrained optimization. A key limitation of this approach is that it produces a full DP contingency table with non-integral counts and without ensuring non-negativity. The authors identified the incorporation of non-negativity constraints as an open direction for future work. 
The most notable work with real-world application is undoubtedly the Disclosure Avoidance System TopDown algorithm, developed by the US Census Bureau \cite{Abowd20222020}. This method follows a similar principle to the hierarchical mechanism but differs in that it does not answer all $k$-hierarchical queries simultaneously. Instead, queries are answered sequentially, starting from the root of the hierarchy and proceeding down the tree. The mechanism employs a more sophisticated convex optimization procedure and produces a complete differentially private contingency table $\tilde{D}$ with non-negative and integer counts. Nevertheless, no formal utility analysis of the algorithm has been provided.
An alternative method for generating general purpose synthetic DP datasets is the Multiplicative Weights Exponential Mechanism (MWEM)~\cite{NIPS2012_208e43f0}, which is perhaps the most well-known in this category. MWEM achieves an error of $\tilde{O}(\sqrt{n d})$, thus its performance is constrained by sampling error.

\paragraph{Contributions}
\texttt{InfTDA} is a specific implementation of a hierarchical TopDown algorithm that employs constrained optimisation based on maximum error minimisation (Chebyshev distance), rather than mean squared error minimisation. The algorithm produces a differentially private contingency table $\tilde{D}$ with non-negative integer counts and guarantees a maximum absolute error of $\tilde{O}(\sqrt{k^3 d})$ for each $k$-hierarchical query. Intuitively, the proof shows that the error introduced by Gaussian noise accumulates linearly across the levels of the hierarchical tree.
The method is scalable, as it can be executed independently on each branch of the hierarchical tree, and each privatisation mechanism runs in at most $O(\max_i |\mathcal{X}_i|)$ time. Alongside \texttt{InfTDA}, Boninsegna and Silvestri \cite{Abowd20222020} introduced \texttt{IntOPT}, an optimisation procedure that operates entirely in the integer domain and requires at most $O(\max_i |\mathcal{X}_i|^2)$ steps to solve the constrained problem.

\section{Preliminaries}
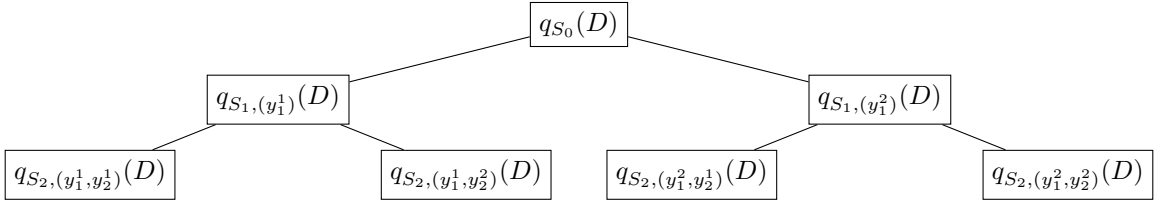
\begin{figure}[t]
\centering
\begin{tikzpicture}[
    every node/.style={rectangle, draw},
    level distance=1cm,
    level 1/.style={sibling distance=8cm}, 
    level 2/.style={sibling distance=5cm}   
]
\node (root){$q_{S_0}(D)$}
    child {node (n1) {$q_{S_1, (y_{1}^1)}(D)$}
      child {node (n11) {$q_{S_2, (y_{1}^1,y_{2}^1)}(D)$}}
      child {node (n12) {$q_{S_2, (y_{1}^1,y_{2}^2)}(D)$}}
    }
    child {node (n2) {$q_{S_1, (y_{1}^2)}(D)$}
      child {node (n21) {$q_{S_2, (y_{1}^2, y_2^1)}(D)$}}
      child {node (n22) {$q_{S_2, (y_{1}^2, y_2^2)}(D)$}}
    };
\end{tikzpicture}
\caption{\footnotesize An example of the non-negative hierarchical tree defined by hierarchical queries for a data universe $\mathcal{X} = \{y_1^1, y_1^2\} \times \{y_2^1, y_2^2\}$. The hierarchical consistency assures that the tree is hierarchical, so that child nodes' attributes sum to their father's attribute.}
\label{fig:tree}
\end{figure}

\paragraph{Non-Negative Hierarchical Tree} A tree $\mathcal{T}$ of depth $d$ is a tuple $\mathcal{T}=(V, E)$:  $V=\cup_{k=0}^{d}V_k$ is the set of nodes, with $V_k$ denoting the set of nodes at level $k \in \{0,\dots,d\}$; $E$ is the set of edges between consecutive levels. A node at level $k$ is indicated with $u_k \in V_{k}$. 
The set of children to a node is indicated by a function $\mathcal{C}: V_{k}\rightarrow 2^{V_{k+1}}$ for any level $k\in \{0,\dots, d\}$. Each node of the tree has an attribute $q(u_k)$. 
\texttt{InfTDA} works for any tree data structure defined as follows

\begin{definition}[Non-Negative Hierarchical Tree \cite{boninsegna2025}] A tree $\mathcal{T} = (V, E)$ is said to be non-negative if it contains non-negative attributes $q(u)\geq 0$ for any $u\in V$. The tree is hierarchical if the attribute of $u$ is the sum of the attributes of its children $\mathcal{C}(u)$. Specifically, for every $u\in V$ which is not a leaf, we have $q(u) = \sum_{v\in \mathcal{C}(u)}q(v)$.
\end{definition}

This paper focuses on integer attributes, so that $q(u)\in \mathbb{N}_0$ for any $u \in V$. Additionally, it will be used ${\bf q}_{\mathcal{C}}(u_k)$, which is the vector containing the attributes of the children of $u_k$.

\paragraph{Hierarchical Queries}
In this paper, we focus on data points containing $d$ categorical features $\mathcal{X} = \mathcal{X}_1\times \dots\times \mathcal{X}_d$ where each $\mathcal{X}_i$ is a set of $|\mathcal{X}_i|\geq 2$ categories. A dataset consists of a multiset of $n$ points for $\mathcal{X}$, thus $D\in \mathcal{X}^n$. 

\begin{definition}[Counting Query] A counting query is defined by a function $q_i: \mathcal{X}\rightarrow \{0,1\}$. Given a dataset $D$, we will abuse notation and write $q_i(D)$ to denote the sum of the values of the function $q_i$ on $D$ as $q_i(D) =\sum_{x\in D}q_i(x)$
\end{definition}

A $k$-way marginal query counts the number of points in $D$ that share the same values for a specified set of $k$ attributes.

\begin{definition}[$k$-way marginal query] A $k$-way marginal is defined by a subset $S\subseteq [d]$ of $|S|=k$, together with a particular value of each of the features $y \in \prod_{i\in S}\mathcal{X}_i$ with $y_i \in \mathcal{X}_i$ for any $i \in S$. Given such a pair $(S, y)$, let $\mathcal{X}(S, y)= \{x\in \mathcal{X}: x_i =y_i \,\,\forall i \in S\}$. The corresponding counting query $q_{S, y}$ is defined as $q_{S, y}(x) = \mathbb{1}(x\in \mathcal{X}(S, y))$
\end{definition}

There are $2^d$ possible $k$-way marginal queries, corresponding to the cardinality of the power set of $[d]$. This paper focuses on a specific subset of these marginal queries, referred to as \emph{k-hierarchical queries}.
\begin{definition}[$k$-hierarchical query] A $k$-hierarchical query is a $k$-way marginal query with $S = [1, \dots, k]$. 
\end{definition}
The hierarchy follows from a simple observation, \emph{a single $k$-hierarchical query can be obtained by summing all the $k+1$ hierarchical queries with the same features.}
\begin{observation}[Hierarchical Consistency] For any $k \in [d-1]$ and for any $y \in \prod_{i=1}^{k}\mathcal{X}_i$ we have that
\begin{equation*}
    q_{S_{k}, y}(D) = \sum_{y_{k+1}\in \mathcal{X}_{k+1}}q_{S_{k+1}, (y, y_{k+1})}(D).
\end{equation*}
For convenience, we define $q_{S_0}(D) = n$ so that $q_{S_0}(D) = \sum_{y_1 \in \mathcal{X}_1}q_{S_1,y_1}(D)$.
\end{observation}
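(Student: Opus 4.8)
The identity is an immediate consequence of how the sets $\mathcal{X}(S,y)$ are nested, so the plan is to make that nesting precise and then push it through the definition of a counting query. First I would fix $k\in[d-1]$ and $y\in\prod_{i=1}^{k}\mathcal{X}_i$ and record the set identity
\[
\mathcal{X}(S_k,y)\;=\;\bigsqcup_{y_{k+1}\in\mathcal{X}_{k+1}}\mathcal{X}\big(S_{k+1},(y,y_{k+1})\big),
\]
i.e.\ the fiber over $y$ is the disjoint union of the fibers over its one-step refinements. Exhaustiveness holds because every $x\in\mathcal{X}$ has some coordinate value $x_{k+1}\in\mathcal{X}_{k+1}$, so any $x$ with $x_i=y_i$ for all $i\in[k]$ satisfies the constraints defining the block indexed by $y_{k+1}=x_{k+1}$; disjointness holds because $x$ has exactly one value of $x_{k+1}$, so it lies in at most one block. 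Equivalently, in indicator form, $\mathbb{1}\big(x\in\mathcal{X}(S_k,y)\big)=\sum_{y_{k+1}\in\mathcal{X}_{k+1}}\mathbb{1}\big(x\in\mathcal{X}(S_{k+1},(y,y_{k+1}))\big)$ for every $x\in\mathcal{X}$.

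Next I would sum this pointwise identity over the records of $D$ and interchange the two finite sums:
\[
q_{S_k,y}(D)=\sum_{x\in D}q_{S_k,y}(x)=\sum_{x\in D}\sum_{y_{k+1}\in\mathcal{X}_{k+1}}q_{S_{k+1},(y,y_{k+1})}(x)=\sum_{y_{k+1}\in\mathcal{X}_{k+1}}q_{S_{k+1},(y,y_{k+1})}(D),
\]
applying the convention $q_{S,y}(D)=\sum_{x\in D}q_{S,y}(x)$ at both ends. For the base case, $S_0=\emptyset$ imposes no constraint, so $\mathcal{X}(S_0,\cdot)=\mathcal{X}$ and $q_{S_0}(x)=\mathbb{1}(x\in\mathcal{X})=1$ for all $x$, whence $q_{S_0}(D)=\sum_{x\in D}1=n$; the $k=0$ case of the display above then reads $n=\sum_{y_1\in\mathcal{X}_1}q_{S_1,y_1}(D)$, matching the stated convention.

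There is no genuinely hard step here: the only point that needs care is checking that the refinement of a fiber is a bona fide partition (both disjoint and covering), after which the claim is nothing more than linearity of the finite sum defining a counting query together with Fubini for finite sums. Iterating the identity down successive levels recovers the full hierarchical-consistency property of the tree in Figure~\ref{fig:tree}, but a single application is all the Observation asks for.
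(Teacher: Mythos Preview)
Your argument is correct: the disjoint-union identity for the fibers $\mathcal{X}(S_k,y)$ together with linearity of the counting sum is precisely what makes the observation true, and your treatment of the $k=0$ case is also fine. The paper itself offers no proof of this observation---it is recorded as immediate from the definitions---so there is nothing to compare against; your write-up simply spells out the one-line unpacking that the paper leaves implicit.
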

Thus, all the $k$-hierarchical queries can be arranged into a non-negative hierarchical tree, as shown in Figure \ref{fig:tree}. The tree contains at each level $k$ all the $k$-hierarchical queries, thus for any $y \in \prod_{i=1}^{k}\mathcal{X}_i$ there is a unique $u_{k}\in V_k$ s.t. $q(u_k)=q_{S_k, y}(D)$. The error considered in this work is the \emph{maximum absolute error} for each $k$-hierarchical query, hence $\max_{y\in \prod_{i=1}^{k}\mathcal{X}_i}|q_{S_{k, y}}(D)-q_{S_{k, y}}(\Tilde{D})|$ for any $k \in \{0,\dots, d\}$.

\paragraph{Differential Privacy}
The neighbouring relation considered in this work is by \emph{substitution} of a single data point, which is more precisely called \emph{bounded} differential privacy \cite{kifer2011no}. With this definition, the size of the dataset does not change among neighbouring datasets, making the query $q_{S_0}(D)=n$ non-sensitive. The mechanism considered in this work is the \emph{multivariate discrete Gaussian mechanism} \cite{Canonne_Kamath_Steinke_2022}, which adds integer noise sampled from a discrete Gaussian distribution satisfying zero-concentrated differential privacy (zCDP) \cite{bun2016concentrated}.

\section{TopDown Mechanism with Chebyshev Distance}
This section introduces \texttt{InfTDA}. The algorithm constructs a privatized non-negative hierarchical tree with integer attributes, where the leaves correspond to the counts of a private full contingency table $\tilde{D}$. The mechanism employs a TopDown approach: at each level $k$, discrete Gaussian noise is added to the attributes, and they are optimized to satisfy the consistency relationship with level $k-1$, and non-negativity, by minimizing the Chebyshev distance (i.e., the maximum error) with the private estimates.

\paragraph{InfTDA}
\begin{algorithm}[t]
\centering
\caption{\texttt{InfTDA}}\label{alg: InfTDA}
\begin{minipage}[t]{0.52\textwidth}
\begin{algorithmic}[1]
\Require Tree $\mathcal{T}=(V,E)$ of depth $d$, privacy budget $\rho>0$.
\State $\tilde{V}_0\gets \{(u_0, n)\}$
\For{$k \in (1,\dots, d)$} 
    \State $\tilde{V}_k \gets \{\}$ 
    \For {$(u, c) \in \tilde{V}_{k-1}$} 
        \vspace{2pt}
        \State $\tilde{\bf q} \gets {\bf q}_{\mathcal{C}}(u)+\mathcal{N}_{\mathbb{Z}}\big(0,d/\rho\big)^{|\mathcal{X}_{k}|}$
        \vspace{2pt}
        \State $\bar{\bf q} \gets \texttt{IntOpt}\big(\tilde{\bf q}, c\big)$ 
        \vspace{2pt}
        \State $C \gets \mathcal{C}(u)$ 
        \State $X \gets \{(C_j, \bar{q}_j)\,:\, \bar{q}_{j}>0\}$ 
        \State $\tilde{V}_{k} \gets \tilde{V}_{k} \cup X$ 
    \EndFor
\EndFor\\
\Return $\tilde{D} = \tilde{V}_{d}$
\end{algorithmic}
\end{minipage}
\hfill
\begin{minipage}[t]{0.46\textwidth}
\begin{algorithmic}[1]
		\Procedure{$\texttt{IntOpt}$}{${\bf x}, c$}.
        \State ${\bf z} \gets \max\big(\big\lceil \frac{c-\sum_i x_i}{\text{dim}(\bf{x})}\big\rceil , -{\bf x}\big)$
        \State $t \gets \|\bf{z}\|_{\infty}$
        \State $I \gets \text{sorted indices of ${\bf x}$ in ascending order}$
        \State $j\gets 0$
        \While{$\sum_i z_i > c-\sum_i x_i$}
         \vspace{2pt}
            \State $\Delta \gets \sum_i z_i - c+\sum_i x_i$
             \vspace{2pt}
            \State $z_{I[j]} \gets \max(z_{I[j]}-\Delta, -x_{I[j]}, -t)$
             \vspace{2pt}
            \State $j \gets (j + 1) \mod |I|$
            \If {$j = 0$}
                \State $t \gets t + 1$
            \EndIf
        \EndWhile
        \State \bf{return } ${\bf x}+ {\bf z}$
        \EndProcedure
	\end{algorithmic}
\end{minipage}
\end{algorithm}
The detailed pseudocode of \texttt{InfTDA} is provided in Algorithm~\ref{alg: InfTDA}. The process begins by constructing the root $\tilde{V}_0$ of the differentially private tree in line 1. Here, $u_0$ denotes the root node of the input tree, while $n$ represents its attribute. The algorithm then starts the TopDown loop in line 2. 
Each iteration aims to construct the set of nodes at level $k$ of the differentially private tree, instantiated in line 3.
Each node of the previous level $k-1$ is sampled in line 4 and used as a constraint. 
In line 5 the discrete Gaussian mechanism with $\rho/d$ privacy budget (for zCDP) is applied to the attributes ${\bf q}_{\mathcal{C}}(u)$ of the child nodes of the constraint. 
Then, in line 6 the private attributes $\tilde{\bf{q}}$ are post-processed to satisfy the constraints. 
The algorithm \texttt{IntOpt} solves the following integer optimization problem by minimizing the Chebyshev distance
\begin{equation}
	\mathcal{P}({\bf x}, c) := \arg \min_{\bf{y}}||{\bf x}-{\bf y}||_{\infty} \quad\text{s.t.} \quad \sum_{i=1}^{\text{dim}({\bf x})}y_i = c.\quad \text{and} \quad y_{i}\in \mathbb{N}_{0} \quad \forall i\in[1,\dots,\text{dim}({\bf x})]
\end{equation}
In line 7, the set of child nodes of the constraint is constructed, and it is augmented with the corresponding post-processed DP attributes in line 8, dropping nodes with zero attributes. 
This final step effectively reduces the size of the DP tree and the running time of the algorithm, particularly for sparse datasets.
In fact, if there is a node $u_k$ with optimized attribute $\bar{q}(u_k)=0$, then, by consistency, the entire branch of the tree starting at $u_k$ will also have nodes with zero attributes.
Finally, in line 9 the set of DP nodes is updated, and this set will serve as constraints in the next iteration.
The algorithm outputs the leaves of the DP tree, which constitute a full contingency table $\tilde{D}$. 

\begin{theorem}[Utility and Privacy of \texttt{InfTDA}] 
\label{theorem: utility}
Let $D = (\mathcal{X}_1\times \mathcal{X}_d)^n$ be a dataset containing $n$ points of $d$ categorical values. Let $\tilde{D}$ be the dataset returned by \texttt{InfTDA}. Then, $|\tilde{D}| = n$ and for any $k\in [1, \dots, d]$ and $\beta \in (0,1)$ we have that
\begin{equation}
    \textnormal{Pr}\left[\max_{y\in \prod_{i=1}^{k}\mathcal{X}_i}\left|q_{S_{k},y}(D)-q_{S_k, y}(\tilde{D})\right|\leq \sum_{\ell= 1}^{k}\sqrt{\frac{8d}{\rho}\log\left(\frac{k\cdot\prod_{i=1}^{\ell}|\mathcal{X}_i|}{\beta}\right)}\right]\geq 1-\beta.
\end{equation}
Additionally, $\texttt{InfTDA}$ satisfies $\rho$-zCDP.
\end{theorem}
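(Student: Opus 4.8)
The plan is to establish the three parts of the statement---the $\rho$-zCDP guarantee, the identity $|\tilde D|=n$, and the error bound---in turn, as the first two are short and fix notation for the third. \emph{Privacy.} \texttt{InfTDA} touches $D$ only in line 5, where at level $k$ it adds discrete Gaussian noise of variance $d/\rho$ to the child-attribute vectors ${\bf q}_{\mathcal{C}}(u)$ of the surviving level-$(k-1)$ nodes $u$; concatenating these vectors over all such $u$ gives a subvector of the histogram of all $k$-hierarchical queries, since distinct surviving nodes index distinct length-$k$ prefixes. Under the substitution (bounded) neighbouring relation, changing one record moves at most one unit of mass between two bins of this histogram, so its $\ell_2$-sensitivity is at most $\sqrt{2}$, and by the zCDP guarantee of the discrete Gaussian mechanism \cite{Canonne_Kamath_Steinke_2022} level $k$ is $\tfrac{(\sqrt{2})^2}{2(d/\rho)}=\rho/d$-zCDP; \texttt{IntOpt} and the pruning in lines 6--8 are post-processing. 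The survival set at level $k$ is a function of the earlier output, so adaptive composition of zCDP over the $d$ levels \cite{bun2016concentrated}---level $0$ being free because $q_{S_0}(D)=n$ is non-sensitive---gives $\rho$-zCDP.

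\emph{Cardinality and the utility recursion.} Since \texttt{IntOpt}$(\cdot,c)$ returns a non-negative integer vector summing to $c$, an induction over levels shows that the optimized attributes at each level sum to $\bar q(u_0)=n$ and each non-leaf equals the sum of its kept children; pruning zero attributes preserves both facts, so $\tilde D=\tilde V_d$ has $n$ records and, the DP tree being a non-negative hierarchical tree, $q_{S_k,y}(\tilde D)$ equals the optimized attribute $\bar q(u)$ of the level-$k$ node $u$ indexing $y$ (with $\bar q\equiv 0$ on pruned branches). Now set $e_j:=\max_{u\in V_j}|q(u)-\bar q(u)|$, so the target is $e_k$. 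I would fix $u\in V_{j-1}$, write ${\bf a}={\bf q}_{\mathcal{C}}(u)$, ${\bf b}=\tilde{\bf q}$ for its noisy version, $c=\bar q(u)$, $\bar{\bf q}=\texttt{IntOpt}({\bf b},c)$, and split $\|{\bf a}-\bar{\bf q}\|_\infty\le\|{\bf a}-{\bf b}\|_\infty+\|{\bf b}-\bar{\bf q}\|_\infty$. The first term is at most $\bar N_j$, the largest noise magnitude sampled at level $j$; the second equals the optimum of $\mathcal{P}({\bf b},c)$, which I bound by the feasible competitor obtained from ${\bf a}$ by moving $|q(u)-c|\le e_{j-1}$ units of mass---possible while staying non-negative and integral, so each coordinate moves by at most $e_{j-1}$---giving optimum $\le\|{\bf b}-{\bf a}\|_\infty+e_{j-1}\le\bar N_j+e_{j-1}$. (The pruned case $\bar q(u)=0$ is trivial: then $q(u)\le e_{j-1}$, hence so is every child attribute.) Taking the maximum over $u\in V_{j-1}$ yields $e_j\le 2\bar N_j+e_{j-1}$, and with $e_0=0$ this unrolls to $e_k\le 2\sum_{\ell=1}^{k}\bar N_\ell$.

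\emph{Tail bound and the main obstacle.} At level $\ell$ at most $\prod_{i=1}^{\ell}|\mathcal{X}_i|$ discrete Gaussians of variance $d/\rho$ are sampled, so a union bound together with the sub-Gaussian tail $\Pr[|\mathcal{N}_{\mathbb{Z}}(0,\sigma^2)|\ge t]\le e^{-t^2/(2\sigma^2)}$ gives $\Pr[\bar N_\ell\ge t]\le \prod_{i=1}^{\ell}|\mathcal{X}_i|\,e^{-t^2\rho/(2d)}$; taking $t=\sqrt{\tfrac{2d}{\rho}\log\big(\tfrac{k\prod_{i=1}^{\ell}|\mathcal{X}_i|}{\beta}\big)}$ bounds this by $\beta/k$, and a further union bound over $\ell\in\{1,\dots,k\}$ makes all $k$ bounds hold at once with probability at least $1-\beta$; substituting into $e_k\le 2\sum_{\ell=1}^{k}\bar N_\ell$ reproduces exactly $\sum_{\ell=1}^{k}\sqrt{\tfrac{8d}{\rho}\log(k\prod_{i=1}^{\ell}|\mathcal{X}_i|/\beta)}$. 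I expect the main obstacle to be the per-level recursion $e_j\le 2\bar N_j+e_{j-1}$: one has to argue that re-projecting the noisy child counts onto the \emph{optimized} parent count, rather than the true count, inflates the error by only an additive $e_{j-1}$---this is precisely what makes the error accumulate linearly in the tree depth instead of geometrically. The feasibility and Chebyshev-optimality of \texttt{IntOpt} used above are taken as given from \cite{boninsegna2025}.
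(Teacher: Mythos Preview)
Your proposal is correct and follows essentially the same route as the paper: the triangle-inequality split of the per-node error, the feasible competitor built by shifting the true child vector by at most $|\err(u_{k-1})|$ (your ``moving mass'' is exactly the paper's offset $\boldsymbol{\xi}$), the resulting recursion $e_j\le 2\bar N_j+e_{j-1}$, and the union bound over $\prod_{i\le\ell}|\mathcal{X}_i|$ Gaussians per level are all identical to the argument in Appendix~\ref{app: utility of InfTDA}. Your explicit treatment of pruned branches ($\bar q(u)=0\Rightarrow$ child error $\le e_{j-1}$) is a small clarification the paper leaves implicit.
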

\begin{proof}
    We refer the reader to Appendix~\ref{app: utility of InfTDA} for the utility proof. Privacy is ensured through composition, parallel composition, and post-processing. A privacy budget of $\rho/d$ is allocated to each level of the tree, as the $\ell_2$-global sensitivity of each $k$-hierarchical query is $\sqrt{2}$, and each level represents a partition of the data. By applying the composition rule of zCDP across $d$ levels, the privacy guarantee is established.
\end{proof}

\paragraph{IntOpt} The main difference between \texttt{InfTDA} and the TopDown algorithm developed by the US Census \cite{Abowd20222020} lies in the optimization procedure. In \texttt{InfTDA}, Chebyshev minimization was chosen for two key reasons: to enable a utility analysis of the mechanism and to design a straightforward optimizer that operates directly in the integer domain. In contrast, in \cite{Abowd20222020} the mechanism solves a Euclidean minimization in two steps. First, the problem is solved in the real domain using convex optimization techniques; then it searches for the best rounding using an integer linear program. \texttt{IntOPT} solves Chebyshev minimization in one shot entirely in the integer domain. Algorithm \ref{alg: InfTDA} shows a simple version that runs in $O(c)$ steps. A version that runs in $O(\min(|\dim({\bf x})|^2, c))$ is presented in Appendix C in \cite{boninsegna2025}. 

The main approach to solve $\mathcal{P}(\mathbf{x}, c)$ is to first compute a lower bound for the Chebyshev distance. Based on this bound, a non-negative solution is then constructed with $\ell_\infty$ norm no greater than the lower bound and with elements summing to a value exceeding $c$. Subsequently, \texttt{IntOPT} reduces the elements to ensure that the summation constraint is satisfied, while increasing the Chebyshev distance as little as possible.
More precisely, $\mathcal{P}({\bf x}, c)$ is equivalent to the following problem in the real domain for $\bf{z} = y-x$
\begin{equation*}
    \min \alpha \geq 0 \quad \text{s.t. } z_i \in [-\alpha, \alpha] \quad \text{and}\quad z_i\geq -x_i \quad \text{and} \quad \sum_{j}z_j = c - \sum_j x_j \quad \forall i \in [1,\dots, \text{dim}(\bf x)].
\end{equation*}
This problem has a lower bound $\alpha \geq \max\left( \left|\frac{c-\sum_{i=1}x_i}{\text{dim}({\bf x})}\right|\,;\, -\min_{i} x_i \right)$ in the real domain, which translates to a lower bound in the integer domain by taking the ceiling function. In line 2 of \texttt{IntOPT}, $\bf z$ is initialized such that $\|{\bf z}\|_\infty\leq \max\left( \left\lceil\left|\frac{c-\sum_{i=1}x_i}{\text{dim}({\bf x})}\right|\right\rceil\,;\, -\min_{i} x_i \right)$. Furthermore, since $\sum_i z_i \geq c - \sum_i x_i$ \footnote{$\sum_{i=1}\max\bigg(\bigg\lceil \frac{c-\sum_{i=1} x_i}{\text{dim}({\bf x})}\bigg\rceil \,,\,-x_i\bigg)\geq  \sum_{i=1}\bigg\lceil \frac{c-\sum_{i=1} x_i}{\text{dim}({\bf x})}\bigg\rceil\geq c-\sum_{i=1}x_i$}, it is sufficient to reduce the elements of $\mathbf{z}$ until this constraint is satisfied. The reduction is applied uniformly across the elements to minimise the increase in Chebyshev distance. This procedure can be implemented through different strategies, resulting in multiple possible solutions. Indeed, Chebyshev optimisation does not admit a unique solution.
The method adopted in \texttt{IntOPT} is based on two main choices: (a) the reduction is applied iteratively following the indices of $\mathbf{x}$ that sort it in ascending order (line 4), and (b) during the first iteration, each element is reduced by the maximum possible amount that does not alter $\|\mathbf{z}\|_{\infty}$ (line 3).
The underlying idea is to prioritise setting to zero the private counts that are negative or small, as these are more likely to correspond to sensitive counts that were originally zero. Such choices have been demonstrated to effectively reduce the presence of false positives in the private contingency table \cite{boninsegna2025}, when compared to alternative optimisation strategies.

\section{Conclusions and Open Questions}
This paper has presented a generalisation of \texttt{InfTDA} for a generic collection of data points characterised by $d$ categorical features. The proposed mechanism offers theoretical guarantees for any $k$-hierarchical query, ensuring a maximum absolute error of $O(\sqrt{k^3 d})$. Furthermore, the mechanism has been shown to perform effectively in practice on both real-world and synthetic datasets \cite{boninsegna2025}. However, it remains an open question whether it is possible to achieve an error of $\tilde{O}(\sqrt{k d})$, as attained by the hierarchical mechanism, while still producing a contingency table $\tilde{D}$ with non-negative integer counts.

\newpage
\bibliographystyle{unsrt}  
\bibliography{references}  

\newpage
\appendix
\section{Utility of \texttt{InfTDA}}
\label{app: utility of InfTDA}
We start with the utility of the discrete Gaussian mechanism.
\begin{corollary}[Corollary 9 \cite{Canonne_Kamath_Steinke_2022}]
\label{corollary: discrete Gaussian tail}
Let $Z\sim \mathcal{N}_{\mathbb{Z}}(0, \sigma^2)$. Then $\textnormal{Var}[Z]\leq \sigma^2$ and $\textnormal{Pr}[Z\geq t]\leq e^{-\frac{t^2}{2\sigma^2}}$ for any $t\geq 0$.  
\end{corollary}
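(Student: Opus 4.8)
The plan is to derive both claims from a single auxiliary fact: the discrete Gaussian $Z \sim \mathcal{N}_{\mathbb{Z}}(0,\sigma^2)$, whose probability mass is proportional to $e^{-x^2/(2\sigma^2)}$ on $x \in \mathbb{Z}$, is sub-Gaussian with proxy variance $\sigma^2$; that is, its moment generating function satisfies $\mathbb{E}[e^{\lambda Z}] \leq e^{\lambda^2 \sigma^2/2}$ for every $\lambda \in \mathbb{R}$. Once this bound is in hand, the tail estimate and the variance estimate both follow by standard arguments, so the real content of the proof is the sub-Gaussian MGF inequality.

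To establish the MGF bound I would write $\mathbb{E}[e^{\lambda Z}]$ explicitly as the ratio $\big(\sum_{x \in \mathbb{Z}} e^{\lambda x - x^2/(2\sigma^2)}\big) / \big(\sum_{x \in \mathbb{Z}} e^{-x^2/(2\sigma^2)}\big)$ and complete the square in the exponent of the numerator, $\lambda x - x^2/(2\sigma^2) = -\tfrac{1}{2\sigma^2}(x - \lambda\sigma^2)^2 + \tfrac{\lambda^2\sigma^2}{2}$. This factors out $e^{\lambda^2\sigma^2/2}$ and reduces the claim to showing that the shifted theta-type sum obeys $\sum_{x\in\mathbb{Z}} e^{-(x-c)^2/(2\sigma^2)} \leq \sum_{x\in\mathbb{Z}} e^{-x^2/(2\sigma^2)}$ for the real shift $c = \lambda\sigma^2$. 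I would prove this monotonicity-of-shift inequality via the Poisson summation formula: applying it to $f(x) = e^{-(x-c)^2/(2\sigma^2)}$ turns the shifted sum into $\sqrt{2\pi\sigma^2}\sum_{k\in\mathbb{Z}} e^{-2\pi^2\sigma^2 k^2}\cos(2\pi k c)$, whereupon bounding each $\cos(2\pi k c) \leq 1$ shows the sum is maximised at integer shifts, and in particular never exceeds its value at $c = 0$. This is the step I expect to be the main obstacle, since it is the only place requiring analytic machinery beyond elementary manipulation; the delicate points are justifying the interchange (absolute convergence of the Fourier series, which holds because $e^{-2\pi^2\sigma^2 k^2}$ decays super-geometrically) and correctly computing the Gaussian Fourier transform.

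With $\mathbb{E}[e^{\lambda Z}] \leq e^{\lambda^2\sigma^2/2}$ secured, the tail bound is a one-line Chernoff argument: for $t \geq 0$ and any $\lambda > 0$, $\textnormal{Pr}[Z \geq t] \leq e^{-\lambda t}\,\mathbb{E}[e^{\lambda Z}] \leq e^{-\lambda t + \lambda^2\sigma^2/2}$, and optimising over $\lambda$ (the minimiser is $\lambda = t/\sigma^2 \geq 0$, which is admissible precisely because $t \geq 0$) yields $\textnormal{Pr}[Z\geq t] \leq e^{-t^2/(2\sigma^2)}$. For the variance, I would use symmetry of the distribution (the mass depends on $x$ only through $x^2$) to note $\mathbb{E}[Z] = 0$, so $\textnormal{Var}[Z] = \mathbb{E}[Z^2]$, and then extract the second-order behaviour of the MGF: expanding $\mathbb{E}[e^{\lambda Z}] = 1 + \tfrac{\lambda^2}{2}\mathbb{E}[Z^2] + o(\lambda^2)$ against $e^{\lambda^2\sigma^2/2} = 1 + \tfrac{\lambda^2\sigma^2}{2} + o(\lambda^2)$ as $\lambda \to 0$ forces $\mathbb{E}[Z^2] \leq \sigma^2$. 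A minor check here is that the MGF is finite in a neighbourhood of $0$ (it is, by the super-geometric decay of the summand), which legitimises differentiating under the sum and the Taylor expansion.
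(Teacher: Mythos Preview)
Your proposal is correct, but there is nothing to compare it against: the paper does not prove this corollary at all. It is imported verbatim as Corollary~9 of Canonne--Kamath--Steinke and used as a black box in the utility analysis. What you have supplied is essentially the argument from that cited source: the sub-Gaussian MGF bound $\mathbb{E}[e^{\lambda Z}]\le e^{\lambda^2\sigma^2/2}$ via completing the square and the Poisson-summation inequality for shifted theta sums, followed by Chernoff for the tail and the second-order expansion at $\lambda=0$ for the variance. All steps are sound, including the justification of Poisson summation (the dual sum $\sum_k e^{-2\pi^2\sigma^2 k^2}$ converges absolutely) and the extraction of $\mathbb{E}[Z^2]\le\sigma^2$ from the MGF inequality (both sides agree to first order at $\lambda=0$, so the second-order coefficients must be ordered).
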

Now we are ready to prove the full version of the utility Theorem for \texttt{InfTDA}.
\begin{proof}[Proof of utility in Theorem \ref{theorem: utility}]
Let us focus on one $k$-hierarchical query, $q_{S_{k}, y}(D)$ for some $y \in \prod_{i=1}^k \mathcal{X}_i$ and $k \in [1, \dots, d]$. The non-negative hierarchical tree represents this query with an attribute of a node $u_k$ at level $k$. Let this attribute be $q(u_k)$, while $\Tilde{q}(u_k)$ be the attribute returned by the Gaussian mechanism before the optimization is applied, and $\Bar{q}(u_k)$ after the optimization. Since $\bar{q}(u_k)$ is the differentially private query returned by \texttt{InfTDA}, we have by triangle inequality that
    \begin{equation*}
        |\err(u_k)| = |q(u_k) - \Bar{q}(u_k)| \leq |q(u_k) -\Tilde{q}(u_k)| + |\Tilde{q}(u_k) - \Bar{q}(u_k)|.
    \end{equation*}
    The first term is just the absolute value of a Gaussian random variable, so we focus on the second term. Let $u_{k-1}$ be the father node of $u_k$, then $\Bar{q}(u_k)$ is an element of the vector ${\bf \bar{q}}_{\mathcal{C}}(u_{k-1})\in \mathbb{N}_0^{|\mathcal{X}_k|}$ solution to the optimization problem $\mathcal{P}({\bf \tilde{q}}_{\mathcal{C}}(u_{k-1}), c)$, where $c=\bar{q}(u_{k-1})$, thus
    \begin{equation*}
        |\tilde{q}(u_k)-\bar{q}(u_k)|\leq \|{\bf \tilde{q}}_{\mathcal{C}}(u_{k-1})-{\bf \bar{q}}_{\mathcal{C}}(u_{k-1})\|_\infty.
    \end{equation*}
     Let us consider another vector $\boldsymbol{\xi}\in \mathbb{Z}^{|\mathcal{X}_k|}$, called the \emph{offset}, such that ${\bf q}_{\mathcal{C}}(u_{k-1}) + {\boldsymbol{\xi}}$ lies within the feasible region of the constrained optimization problem $\mathcal{P}$, then
     \begin{align}
      \label{eq: proof constrain 1}
     q_{\mathcal{C}, j}(u_{k-1})+\xi_j &\geq 0\\
    \label{eq: proof constrain 2}
     \sum_{j=1}^{|\mathcal{X}_k|} q_{\mathcal{C}, j}(u_{k-1})+\xi_j &= \bar{q}(u_{k-1}).
    \end{align}
    As the vector ${\bf \bar{q}}_{\mathcal{C}}(u_{k-1})$ is a solution to the optimization problem, it minimizes the Chebyshev distance with ${\bf \tilde{q}}_{\mathcal{C}}(u_{k-1})$ under the non-negativity and summation constraints, then by triangle inequality
     \begin{align*}
         \|{\bf \tilde{q}}_{\mathcal{C}}(u_{k-1})-{\bf \bar{q}}_{\mathcal{C}}(u_{k-1})\|_\infty&\leq \|{\bf \tilde{q}}_{\mathcal{C}}(u_{k-1})-{\bf q}_{\mathcal{C}}(u_{k-1})-\boldsymbol{\xi}\|_\infty && \text{\{${\bf \bar{q}}_{\mathcal{C}}(u_{k-1})$\} is a \emph{optimal} solution\}}\\
         &\leq \|{\bf \tilde{q}}_{\mathcal{C}}(u_{k-1})-{\bf q}_{\mathcal{C}}(u_{k-1})\|_\infty + \|\boldsymbol{\xi}\|_\infty && \text{\{due to triangle inequality\}}.
     \end{align*}
     As $|q(u_k)-\tilde{q}(u_k)|\leq \|{\bf \tilde{q}}_{\mathcal{C}}(u_{k-1})-{\bf q}_{\mathcal{C}}(u_{k-1})\|_\infty$, the upper bound becomes
    \begin{equation}
    \label{eq: proof continue}
        |\err(u_k)|\leq 2 \|{\bf \tilde{q}}_{\mathcal{C}}(u_{k-1})-{\bf q}_{\mathcal{C}}(u_{k-1})\|_\infty + \|\boldsymbol{\xi}\|_\infty.
    \end{equation}
    The problem is now to find an upper bound for $\|\boldsymbol{\xi}\|_{\infty}$.

    To upper bound $\|\boldsymbol{\xi}\|_{\infty}$, we now construct an example of $\boldsymbol{\xi}$ satisfying the constraints and having a bounded $\ell_\infty$ norm. By construction, from Equation \ref{eq: proof constrain 2} and hierarchical consistency, we have that
    \begin{equation*}
        \sum_{j=1}^{|\mathcal{X}_k|}\xi_j = \bar{q}(u_{k-1})-\sum_{j=1}^{|\mathcal{X}_k|}q_{\mathcal{C}, j}(u_{k-1}) = \bar{q}(u_{k-1})-q(u_{k-1}) =\err(u_{k-1}).
    \end{equation*}
    If $\err(u_{k-1})\geq 0$ we can take $\xi_j = \frac{\err (u_{k-1})}{|\mathcal{X}_k|}$ for any $j\in[|\mathcal{X}_k|]$ as a solution satisfying the summation constraint and the inequality constraint in Equation~\ref{eq: proof constrain 1}. However, this is not sufficient. If $\err(u_{k-1})< 0$ the inequality constraint might not be satisfied. In this scenario we might consider a solution where $\xi_i = 0$ for any $i \in [|\mathcal{X}_k|]\setminus\{i^*\}$ where $\xi_{i^*} = -|\err(u_{k-1})|$. Any zero element satisfies the constraint in Equation~\ref{eq: proof constrain 1} as $q_{\mathcal{C}, j}(u_{k-1})\geq 0$. If $\xi_{i^*} \geq -q_{\mathcal{C}, i^*}(u_{k - 1})$ then we finish and obtain an upper bound $\|\boldsymbol{\xi}\|_{\infty}\leq |\err(u_{k-1})|$. In the other case where $\xi_{i^*}<-q_{\mathcal{C}, i^*}(u_{k-1})$ we need to augment $\xi_{i^*}$ up to meet $-q_{\mathcal{C}, i^*}(u_{k-1})$. By doing so we increase $\sum_i \xi_i$ making necessary to decrease some elements of the offset. As we are reducing elements that initially are zero, the new offset still contains only negative elements, and as  $\sum_i \xi_i = -|\err(u_{k-1})|$ any element cannot be less than $-|\err(u_{k-1})|$. Thus, we conclude that there always exists an offset such that $\|\boldsymbol{\xi}\|_{\infty}\leq |\err(u_{k-1})|$.

    Now we continue from the upper bound in Equation \ref{eq: proof continue}
    \begin{equation}
    \label{eq: branch}
        |\err(u_k)|\leq 2 \|{\bf \tilde{q}}_{\mathcal{C}}(u_{k-1})-{\bf q}_{\mathcal{C}}(u_{k-1})\|_\infty + |\err(u_{k-1})|.
    \end{equation}
    By taking the maximum we obtain
    \begin{equation*}
        \max_{u_k \in V_k}|\err(u_{k})|\leq 2\cdot \max_{u_{k-1} \in V_{k-1}} \|{\bf \tilde{q}}_{\mathcal{C}}(u_{k-1})-{\bf q}_{\mathcal{C}}(u_{k-1})\|_\infty + \max_{u_{k-1} \in V_{k-1}}|\err(u_{k-1})|.
    \end{equation*}
    Completing the recurrence relation by ending at $\err(u_0) = 0$ we get
    \begin{equation}
    \label{eq: proof InfTDA 2}
          \max_{u_{k} \in V_{k}}|\err(u_k)| \leq 2\cdot \sum_{\ell= 1}^{k}\max_{u_{k-\ell} \in V_{k-\ell}} \|{\bf \tilde{q}}_{\mathcal{C}}(u_{k-\ell})-{\bf q}_{\mathcal{C}}(u_{k-\ell})\|_\infty.
    \end{equation}
    Note that in the previous equation $u_{k-1}$ is the father node of $u_{k}$, as previously defined, whereas $u_{k-2}$ is the father node of $u_{k-1}$ and so on.
    We may upper bound each term in Equation \ref{eq: proof InfTDA 2} by using Corollary \ref{corollary: discrete Gaussian tail} with a union bound over the dimension of each vector, which is $|\mathcal{X}_{k-\ell +1}|$, times the number of nodes at level $V_{k-\ell}$, which are $\prod_{i =1}^{k-\ell}|\mathcal{X}_i|$. Thus, we obtain 
    \begin{equation*}
        \textnormal{Pr}\left[\max_{u_{k-\ell} \in V_{k-\ell}} \|{\bf \tilde{q}}_{\mathcal{C}}(u_{k-\ell})-{\bf q}_{\mathcal{C}}(u_{k-\ell})\|_\infty\leq \sqrt{\frac{2d}{\rho}\log\left(\frac{k\cdot\prod_{i=1}^{k-\ell + 1}|\mathcal{X}_{i}|}{\beta}\right)}\right]\geq 1-{\beta \over k}.
    \end{equation*}
    By re-indexing Equation \ref{eq: proof InfTDA 2}, summing the previous upper bounds, and applying an union bound, we finally obtain our claim
    \begin{equation*}
        \textnormal{Pr}\left[\max_{u_{k}\in V_k}|\err(u_k)| \leq \sum_{\ell= 1}^{k}\sqrt{\frac{8d}{\rho}\log\left(\frac{k\cdot\prod_{i=1}^{\ell}|\mathcal{X}_i|}{\beta}\right)}\right]\geq 1-\beta.
    \end{equation*}
\end{proof}
\end{document}